\newtheorem{thm}{Theorem}
\newtheorem{prop}{Proposition}
\newcommand{\beq} {\begin{equation}}
\newcommand{\eeq} {\end{equation}}
\newcommand{\beqr} {\begin{eqnarray*}}
\newcommand{\eeqr} {\end{eqnarray*}}
\begin{document}
\title{An Iterated Game of Uncoordinated Sharing of Licensed Spectrum Using Zero-Determinant Strategies}

\author{
    \IEEEauthorblockN{Ashraf Al Daoud\IEEEauthorrefmark{1}, George Kesidis\IEEEauthorrefmark{2}, J\"{o}rg Liebeherr\IEEEauthorrefmark{1} }\\
    \IEEEauthorblockA{\IEEEauthorrefmark{1}Department of Electrical and Computer Engineering,
    University of Toronto, Canada} \\
    \IEEEauthorblockA{\IEEEauthorrefmark{2}Department of Computer Science and Engineering,
    Pennsylvania State University, PA, USA}
}



\maketitle

\begin{abstract}
\boldmath
We consider private commons for secondary sharing of licensed spectrum bands with no access coordination provided by the primary license holder. In such environments, heterogeneity in demand patterns of the secondary users can lead to constant changes in the interference levels, and thus can be a source of volatility to the utilities of the users. In this paper, we consider secondary users to be service providers that provide downlink services. We formulate the spectrum sharing problem as a non-cooperative iterated game of power control where service providers change their power levels to fix their long-term average rates at utility-maximizing values. First, we show that in any iterated $2\times 2$ game, the structure of the single-stage game dictates the degree of control that a service provider can exert on the long-term outcome of the game. Then we show that if service providers use binary actions either to access or not to access the channel at any round of the game, then the long-term rate can be fixed regardless of the strategy of the opponent. We identify these rates and show that they can
be achieved using mixed Markovian strategies that will be clearly identified in the paper.
\end{abstract}

\IEEEpeerreviewmaketitle

\section{Introduction}
Advancements in mobile broadband access technologies in the recent years have resulted in an exponential surge in demand for wireless data services~\cite{Cisco}. While the demand is expected to grow, the wireless industry is trying to develop techniques to improve utilization of available radio spectrum bands. This has led to the advent of cognitive radio technologies which allow network users to adapt their system parameters to the dynamic environment and optimize spectrum utilization without necessarily cooperating to pursue such goals. Under this paradigm where centralization and traditional spectrum sharing techniques are no longer valid concepts in modern spectrum utilization, game theory has emerged as a useful tool for modeling and analyzing user behavior in non-cooperative spectrum sharing environments. See~\cite{Alpcan_book} for a recent survey on games for cognitive radio networks.

Non-cooperative spectrum sharing can be studied for scenarios that involve secondary provision of licensed spectrum where primary license holders lease out the surplus of their spectral capacities to some secondary users. These scenarios for spectrum sharing have been made possible by recent regulatory models that emerged under different proposals including the FCC's \emph{private commons} model~\cite{FCC} and the \emph{licensed shared access} model of the EU~\cite{RSPG}. Suggested models entail that spectrum provision may not necessarily require license holders to coordinate spectrum access among secondary users. For example, under the private commons model, secondary users are granted spectrum access via using peer-to-peer communications without relying on the license holder's infrastructure. In fact, license holders may not even need to have a deployed network in order to be eligible for this model~\cite{Buddhikot}. However, license holders can still authorize the use of certain communication devices or can dictate using specific technical parameters.

This paper presents a framework for designing strategies for secondary sharing of licensed spectrum bands. The underlying communication system involves a number of service providers that share an interference channel to provide downlink services without access coordination.  A distinctive feature of sharing licensed channels, versus sharing unlicensed ones, is that the utilities of the service providers from achieving some rate on the channel are discounted by the cost of utilizing the channel. This cost is paid to the primary license holder on usage basis in the form of monetary compensation. Since it is plausible to assume that the marginal utility of each service provider decreases by increasing the rate, transmitting at the maximum allowed power level can be sub-optimal from a utility maximization point of view. In this respect, operating at a utility-maximizing rate from the standpoint of any of the service providers is governed by the interference in the channel which depends on the demand patterns of the service providers. Specifically, at times when the demand is high, a service provider transmits at relatively higher power levels causing more interference, and visa versa when the demand is low, thus leading to variations in the interference. Demand patterns are generally unknown, and the key problem is to design strategies for power control that help service providers achieve their optimal rates and cope with fluctuations in the interference.


%

In this paper, secondary sharing of licensed spectrum is formulated as an iterative game of power control. Namely, at each round of the game service providers choose their transmission power levels and consequently achieve some downlink rates that depend also on the interference from other service providers. We show that, there exist strategies that allow service providers to fix their rates, on the long term, regardless of the strategies of their opponents. The key idea is to realize that in iterated games with same action space and same payoff profiles, players with longer memories have no advantage over players with shorter ones. Therefore, players can in any round condition their moves on the outcome of the game in the previous round. This implies that iterated games can lend themselves to Markovian analysis where a player's strategy can be defined in terms of the state transition probabilities of the resulted Markovian chain.

We use this insight in the power control game. First, we show that in any $2\times2$ iterated game,\footnote{Two players each with boolean action space leading to a $2\times 2$ payoff matrix.} the structure of the payoff matrix of any of the players in the single-stage game dictates whether or not a player can control its long-term payoff of the game. We also show that this property can be realized in the power control game by transforming the action space of the service providers into a binary space; either to access or not to access the channel in each round of the game. The approach provides the players with full control on a range of rates that will be clearly characterized in the paper. In essence, a player can achieve any value in the valid range by iterating its actions using mixed (probabilistic) Markovian strategies. The intuition behind this approach is to allow players to maintain a certain rate, on the long term, by using reactive strategies such that whenever the average rate exceeds the targeted value, it can be lowered by not participating in the channel in some future rounds. The paper identifies these strategies and shows that any fixed outcome of the game can be achieved using multiple strategies that differ by their convergence rates.

Game theory for sharing wireless resources is a widely studied topic in communication networks, in particular, using
the classical theory of non-cooperative games~\cite{Osborne}. Proposed strategic-game models for sharing interference channels
have included pricing mechanisms~\cite{Mandayam, Huang, Niyato, Attar}, medium access control~\cite{Jin,Jin2}, and transmission power control of both the uplink (end-users to base-stations) and the downlink (base-stations to end-users)~\cite{Alpcan, Altman, Chung}. A common assumptions in these models is that game structure and rationality of players are common knowledge in the game, i.e., players hold beliefs about each others' strategic choices~\cite{Aumann}. Such assumptions are not limited to single-stage games but also extend to iterated games where players interact in multiple rounds.

Iterated games are studied to induce cooperation in self organizing wireless {\em ad-hoc} networks. Most of the studies use the iterated Prisoners' Dilemma game to model packet forwarding between nodes~\cite{Korilis, Felegyhazi, Jaramillo}. The model is motivated by many experimental studies which show that Tit-For-Tat can be an efficient strategy in this regard~\cite{Axelrod}. An important feature of iterated games is the fact that an action taken by a player at any round of the game has an impact on the future actions of the other players. This in turn leads to the concept of punishment for deviating from equilibrium strategies. Such techniques are applied in the area of sharing unlicensed bands, particularly in~\cite{Etkin}, where multiple systems coexist and interfere with each other. In essence, in~\cite{Etkin}, spectrum sharing is modeled as an iterated power control game to devise self-enforcing power control rules that lead to fair and efficient Nash equilibria.

In this work, we follow a different approach from~\cite{Etkin} and seek power control strategies that allow service providers to share spectrum on the downlinks and
maintain average 
rates that are robust to variations in power control strategies of the opponents.
Our work is motivated by recent results in the theory of
iterated Prisoners' Dilemma games, where Press and Dyson have recently shown in~\cite{Press}
that these games admit \textit{zero-determinant strategies}, which in some cases allow players to control each others' long-term payoffs, and in other cases allow them to set a linear relationship between the payoffs. Such strategies are realized if we observe that progression of the game can be formulated as a one-step Markov process. The approach of~\cite{Press}, from which the earlier results of \cite{Boerlijst} can be obtained as a special case, more readily admits generalization to asymmetric games, with different payoff structures that can involve more than two players.

The major contribution of our paper can be summarized as follows:
\begin{itemize}
\item[1)] We present an extension of the approach in~\cite{Press} and clearly identify structures of $2\times2$
games that allow player to control its own payoff or that of the opponent, thus implying a broader application that is not restricted to the Prisoners' Dilemma game. We follow a generalized approach as we do not assume symmetric payoffs of the players, and thus no assumption of symmetric control on the outcome of the game.
\item[2)] We identify the feasible set of payoffs that a player can guarantee from the game and identify mixed Markovian strategies (zero-determinant strategies) for each possible outcome of the game. Furthermore, we show that the notion of payoff control is not restricted to two-player games, but can be extended to games with multiple players.
\item[3)] We formulate secondary sharing of wireless spectrum as an iterated game of power control. We use an economic model for downlink data transmission to argue that, in interference channels  with no access coordination, players can fix their long-term rates at utility-maximizing values by taking binary actions (e.g., either to transmit at maximum power or not to transmit). In this regard, we identify strategies for iterating these actions and study their convergence.
\end{itemize}

The rest of the paper is organized as follows: In Section~\ref{sec:theory}, we address zero-determinant strategies for $2\times2$ iterated games and present our results for games of general payoff structures. We also extend our results to include games with more than two players. In Section~\ref{sec:sharing}, we analyze secondary spectrum sharing as an iterated game of power control and devise strategies for the proposed $2\times2$ game. A numerical study to analyze convergence and power consumption of these strategies is provided in Section~\ref{sec:numerics}. Finally, the paper concludes in Section~\ref{sec:concl}.

\section{Zero-Determinant Strategies for Iterated Games} \label{sec:theory}

In this section, we develop new results on iterated games where the action space and the payoff matrix do not change over the course of the game. Our analysis is based on the approach of \cite{Press} which shows that there exist strategies for indefinitely iterated  $2\times2$ games that are referred to as ``zero-determinant" strategies and which allow the players to control their long-term payoffs or the payoffs of their opponents. The type of control that a player can exert hinges on the structure of the game. In this section, we identify these structures and any feasible set of payoffs that can be controlled. We also identify the strategies that lead to this control.

For this purpose, consider a $2\times2$ iterated game with the single round payoff matrix given in Figure~\ref{fig:iterated}. In each round of the game, row player $X$ and column player $Y$ have binary actions, respectively $n_1,n_2\in \{1,2\}$, leading to payoffs, respectively  $X_{\bf n},Y_{\bf n}$ where ${\bf n} = (n_1,n_2)$. A salient feature of iterated games is that players with longer memories of the history of the game have no advantage over those with shorter ones, i.e., a strategy of a player that shares the same history used by the opponent does not gain more from using longer history of the game. This is due to the iterative nature of the game where actions and payoffs are indefinitely fixed (see the Appendix of \cite{Press}), and thus, strategies can be designed by assuming that the players have memories of a single move.


\begin{figure}[]
\centering
\begin{tabular}{c|c|c}
\backslashbox{Player $X$}{Player $Y$} & $n_2=1$ & $n_2=2$ \\ \hline
$n_1=1$ & $(X_{1,1},Y_{1,1})$ & $(X_{1,2},Y_{1,2})$  \\ \hline
$n_1=2$ & $(X_{2,1},Y_{2,1})$ & $(X_{2,2},Y_{2,2})$
\end{tabular}
\caption{One round payoff matrix of the iterated two-player two-action game.}
\label{fig:iterated}
\end{figure}

\subsection{Zero Determinant Strategies for $2\times2$ Games} \label{strategies}
We describe the state of the game in any round by the actions of the players in that round. Specifically, let $\Omega$ denote the set of all possible states,\footnote{In the application to spectrum sharing which will be addressed later in the paper,
play action $1$ will correspond to accessing the channel with maximum power, while play action $2$ will correspond to accessing the channel with lower power or not accessing the channel at all.} i.e.,
\beq
\Omega=\{(1,1), (1,2), (2, 1), (2,2)\},
\label{eq.states}
\eeq
and let $\mathbf{n}(t)$ denote the state of the game in round $t \geq 0$.
In each round, players choose their actions with probabilities that depend
on the state of the game in the previous round and thus the process
$\{\mathbf{n}(t): t=0,1,\cdots\}$  can be modeled as a Markov chain.

In this respect, consider player~$X$ and let
\[
	p^{\bf k}_1 = {\sf Pr}\left(n_1(t+1)=1~|~ {\bf n}(t)={\bf k}\right), ~~\forall {\bf k}\in\Omega,
\]
denote the probability that player~$X$ takes action~$1$ in round $t+1$ if in the previous round, player $X$ took action  $k_1$
and player $Y$ took action $k_2$.
For player $Y$, similarly let
\[
	p^{\bf k}_2 = {\sf Pr}(n_2(t+1)=1~|~ {\bf n}(t)={\bf k}), ~~\forall {\bf k}\in\Omega.
\]
To simplify notation, we write
\[
p^{\bf k} = p^{\bf k}_1  ~~\mbox{and} ~~q^{\bf k} = p^{\bf k}_2.
\]
The set of actions of a player is referred to as the strategy of that player, i.e., $\{p^{\bf k},\forall {\bf k}\in\Omega\}$ is a strategy of player $X$ and $\{q^{\bf k},\forall {\bf k}\in\Omega\}$ is a strategy of player $Y$. The state transition matrix of the Markov chain can be described as follows assuming that the rows and the columns are in the same order as listed
in~(\ref{eq.states}):
\[ \begin{array}{ll}
  \mathbf{M}=   &
  \mbox{
$\begin{pmatrix}
    p^{1,1}q^{1,1} & p^{1,1}(1-q^{1,1}) & (1-p^{1,1})q^{1,1} & (1-p^{1,1})(1-q^{1,1})\\
p^{1,2}q^{2,1} & p^{1,2}(1-q^{2,1}) & (1-p^{1,2})q^{2,1} & (1-p^{1,2})(1-q^{2,1}) \\
p^{2,1}q^{1,2} & p^{2,1}(1-q^{1,2}) & (1-p^{2,1})q^{1,2} & (1-p^{2,1})(1-q^{1,2})\\
p^{2,2}q^{2,2} & p^{2,2}(1-q^{2,2}) & (1-p^{2,2})q^{2,2} & (1-p^{2,2})(1-q^{2,2})
  \end{pmatrix}$.}
\end{array} \]

Let $\pi_{i,j}$ be the probability that player $X$ takes action $i$ and player $Y$ takes action $j$. The Markov chain has a stationary distribution
 $\mbox{\boldmath${\pi}$}^{\rm T}=(\pi_{1,1},\pi_{1,2},\pi_{2,1},\pi_{2,2})$ that satisfies
\[
	\mbox{\boldmath${\pi}$}^{\rm T}\mathbf{M}=\mbox{\boldmath${\pi}$}^{\rm T}.
\]
$\mbox{\boldmath${\pi}$}$ is unique if and only if the chain has a unique closed communication class. In this case, the long-term payoff for player $X$ is given by
\beq
u_X= \mbox{\boldmath${\pi}$}^{\rm T}\mathbf{X}
\label{eq.revX}
\eeq
and for player $Y$ is given by
\beq
u_Y= \mbox{\boldmath${\pi}$}^{\rm T}\mathbf{Y},
\label{eq.revY}
\eeq
where
 \[
\mathbf{X} =  \left( \begin{array}{c}
 X_{1,1}\\
 X_{1,2}\\
 X_{2,1} \\
 X_{2,2} \end{array} \right)
 ~~~~~\mbox{and} ~~~~~
   \mathbf{Y} =  \left( \begin{array}{c}
Y_{1,1} \\
Y_{1,2} \\
Y_{2,1}  \\
Y_{2,2}   \end{array} \right).
\]
Assuming unique stationary $\mbox{\boldmath${\pi}$}$, we can write
\[
\mbox{\boldmath${\pi}$}^{\rm T}\mathbf{\tilde{M}}=\mathbf{0},
\]
where $\mathbf{\tilde{M}}= \mathbf{M}-\mathbf{I}$.
Let ${\sf adj}(\mathbf{\tilde{M}})$ be the adjugate matrix,
i.e., the  transposed matrix of signed minors.
By Cramer's rule,
\[
{\sf adj}(\mathbf{\tilde{M}})\mathbf{\tilde{M}}=
{\sf det}(\mathbf{\tilde{M}})I=0,
\]
where the second equality holds because $\mathbf{\tilde{M}}$ is singular.
This implies that each row of the matrix ${\sf adj}(\mathbf{\tilde{M}})$ is proportional to $\mbox{\boldmath${\pi}$}$. Furthermore,
for an arbitrary vector $\mathbf{f}$,
$\mbox{\boldmath${\pi}$}^{\rm T} \mathbf{f}$
is the determinant of a modified version of $\mathbf{\tilde{M}}$
with a column replaced by $\mathbf{f}$.
The determinant does not change by adding the first column of
this modified matrix to the second and third columns, so that
 \beq
 \mbox{\boldmath${\pi}$}^{\rm T}\mathbf{f} =
 {\sf det} \left( \begin{array}{cccc}
-1+p^{1,1}q^{1,1} & -1+p^{1,1} & -1+q^{1,1} & f_1\\
p^{1,2}q^{2,1} & -1+p^{1,2} & q^{2,1} & f_2 \\
p^{2,1}q^{1,2} & p^{2,1} & -1+ q^{1,2} & f_3\\
p^{2,2}q^{2,2} & p^{2,2} & q^{2,2} & f_4 \end{array} \right). \label{eq.det}\eeq

A key observation of \cite{Press}
is that the second and the third columns of the matrix in~(\ref{eq.det})
are purely dependent on the actions of player $X$ and player $Y$, respectively.
In specific,
 \[
{\mathbf{\tilde{m}}}_X=  \left( \begin{array}{c}
-1+p^{1,1} \\
-1+p^{1,2}  \\
p^{2,1}  \\
p^{2,2}  \end{array} \right)
 ~~~~~\text{and} ~~~~~
 {\mathbf{\tilde{m}}}_Y =  \left( \begin{array}{c}
-1+q^{1,1} \\
q^{2,1} \\
-1+ q^{1,2}  \\
q^{2,2}    \end{array} \right).
\]
Without loss of generality, consider the game from the standpoint of player $X$.
If ${\mathbf{\tilde{m}}}_X = \mathbf{f}$, the determinant in~(\ref{eq.det}) is equal to $0$, and thus
if
$\mathbf{f}= a\mathbf{X}+ b$, Equation~(\ref{eq.det})
is
\beq
a u_X + b = 0, \label{Press-theorem}
\eeq
where $u_X$ is defined in~(\ref{eq.revX}), and $a$ and $b$ are non-zero real numbers.

Player $X$ can thus fix the value of $u_X$ regardless of the strategy of player~$Y$. To achieve this, the values of $a$ and $b$ should be chosen such that  $p^{1,1}, p^{1.2},p^{2,1},$ and $p^{2,2}$ are probabilities which in turn depends on the structure of the game via the equality ${\mathbf{\tilde{m}}}_X = a\mathbf{X} + b$.

In the following theorem, we state our first result that defines the structures of~$2\times 2$ games where player~$X$ can control $u_X$ and defines the strategies that lead to such control.


\begin{thm} \label{thm:fixing}
For $k=1,2$, let $X_{k,\min}$ and $X_{k,\max}$, respectively, denote the minimum and maximum value of row $k$ in the payoff matrix of a $2\times2$ iterated game. Specifically,
\begin{eqnarray}
X_{k,\min} &=& \min(X_{k,1}, X_{k,2}), \nonumber  \\
X_{k,\max} &=& \max(X_{k,1}, X_{k,2}). \nonumber
\end{eqnarray} Player~$X$ can control its long-term payoff $u_X$ regardless of the action of player $Y$ if and only if there exist $k_{\max}, k_{\min} \in \{1,2\}$ where
\beq
X_{k_{\max},\max} \leq X_{k_{\min},\min}. \label{payoff_cond}
\eeq
If so, any value of $u_X$ from the interval $[X_{k_{\max},\max},
X_{k_{\min},\min}]$ can be achieved by using the following mixed/probabilistic
strategies:
\begin{eqnarray}
p^{1,1} &=& 1 + \left(1 -\frac{X_{1,1}}{u_X}\right)b, \label{eq.cond1} \\
p^{1,2} &=& 1 + \left(1 -\frac{X_{1,2}}{u_X}\right)b, \label{eq.cond2} \\
p^{2,1} &=& \left(1 -\frac{X_{2,1}}{u_X}\right)b, \label{eq.cond3} \\
p^{2,2} &=& \left(1 -\frac{X_{2,2}}{u_X}\right)b, \label{eq.cond4}
\end{eqnarray}
where  $b$ is chosen such that, if $k_{\min} = 1$ and $k_{\max} = 2$, then
\[
 0 < b \leq \min\left(\frac{-1}{1- \frac{X_{1,max}}{u_X}},\frac{1}{1- \frac{X_{2,min}}{u_X}}\right),
\]
and if $k_{\min} = 2$ and $k_{\max} = 1$, then
\[
 \max\left(\frac{-1}{1- \frac{X_{1,min}}{u_X}},\frac{1}{1- \frac{X_{2,max}}{u_X}}\right) \leq b < 0.
\]
\end{thm}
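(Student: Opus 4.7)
My plan is to prove both directions from the zero-determinant identity $au_X+b=0$ derived just before \eqref{Press-theorem}, using the reparameterization $a=-b/u_X$ to reduce everything to a question of when the resulting probabilities lie in $[0,1]$.

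\emph{Sufficiency.} I would first substitute $a=-b/u_X$ into the ansatz $\tilde{m}_X=a\mathbf{X}+b\mathbf{1}$ entry by entry; the formulas \eqref{eq.cond1}--\eqref{eq.cond4} fall out immediately. Writing $c_{i,j}=1-X_{i,j}/u_X$, the requirement $p^{i,j}\in[0,1]$ collapses to the pair of constraints $bc_{1,j}\in[-1,0]$ and $bc_{2,j}\in[0,1]$ for $j=1,2$. I split on the sign of $b$ (the degenerate case $b=0$ gives the absorbing strategy $p^{1,j}=1,\,p^{2,j}=0$, which fixes nothing). For $b>0$, the sign conditions $c_{1,j}\leq 0$ and $c_{2,j}\geq 0$ are equivalent to $X_{2,\max}\leq u_X\leq X_{1,\min}$, which is exactly \eqref{payoff_cond} with $k_{\max}=2,\,k_{\min}=1$. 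The magnitude inequalities $bc_{1,j}\geq -1$ and $bc_{2,j}\leq 1$ then become upper bounds on $b$; since $c\mapsto 1/c$ is strictly decreasing on each of $(-\infty,0)$ and $(0,\infty)$, the tightest of these bounds come from the $X$-values farthest from $u_X$, namely $X_{1,\max}$ for row~$1$ and $X_{2,\min}$ for row~$2$, giving the interval claimed in the theorem. The case $b<0$ is mirror-symmetric and yields \eqref{payoff_cond} with $k_{\max}=1,\,k_{\min}=2$ together with the dual bound on $b$.

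\emph{Necessity.} The derivation leading to \eqref{Press-theorem} shows that making $u_X=\boldsymbol{\pi}^{\rm T}\mathbf{X}$ invariant under all choices of $\tilde{m}_Y$ requires, upon setting $\mathbf{f}=\mathbf{X}-u_X\mathbf{1}$, that the fourth column of the matrix in \eqref{eq.det} be proportional to the second; otherwise the determinant, being linear in the entries of the third column, would vary with $\tilde{m}_Y$. Hence $\tilde{m}_X\in\operatorname{span}\{\mathbf{X},\mathbf{1}\}$ is necessary, so the sufficiency analysis above is in fact an exhaustive characterization of feasible $u_X$; if both $X_{1,\max}>X_{2,\min}$ and $X_{2,\max}>X_{1,\min}$, then neither sign of $b$ admits any valid probabilities, and $u_X$ cannot be unilaterally controlled.

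\emph{Anticipated obstacle.} The arithmetic is routine; the real care is in the sign bookkeeping when dividing the magnitude inequalities by $c_{i,j}$, because the sign of each $c_{i,j}$ flips between the two cases, which in turn flips both the direction of the inequality and the identification of which of $X_{i,\min},X_{i,\max}$ is binding. Organizing the four constraints into a small sign table indexed by $(\operatorname{sign} b,\,i)$ makes the monotonicity of $1/c$ on each half-line sufficient to read off the correct extremal payoff without algebraic slips.
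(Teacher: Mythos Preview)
Your sufficiency argument is the paper's, essentially verbatim: both substitute $a=-b/u_X$ into the ansatz $\mathbf{\tilde{m}}_X=a\mathbf{X}+b\mathbf{1}$, split on the sign of $b$, and extract first the sign constraints on the quantities $1-X_{i,j}/u_X$ (giving the admissible interval for $u_X$) and then the magnitude constraints (giving the bound on $b$). Your appeal to the monotonicity of $c\mapsto 1/c$ to pick out which entry is binding is a mild compression of what the paper does by writing out each of the four inequalities separately.

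The divergence is in the necessity direction. The paper never leaves the zero-determinant ansatz---its opening sentence is ``We need to obtain $a$ and $b$ that satisfy $\mathbf{\tilde{m}}_X=a\mathbf{X}+b$''---so its ``only if'' amounts to ``no ZD strategy works when \eqref{payoff_cond} fails.'' You try for more: that unilateral control of $u_X$ forces $\mathbf{\tilde{m}}_X\in\mathrm{span}\{\mathbf{X},\mathbf{1}\}$. Your determinant argument has a gap, since the first column of the matrix in \eqref{eq.det} also depends on the $q^{\mathbf{k}}$ (through the products $p^{\mathbf{k}}q^{\mathbf{k}}$), so varying $Y$'s strategy perturbs two columns simultaneously and the determinant is not simply linear in $\mathbf{\tilde{m}}_Y$ with the rest held fixed. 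More seriously, the strengthening you are aiming for is false: take $X_{1,1}=1$, $X_{1,2}=4$, $X_{2,1}=3$, $X_{2,2}=2$, for which \eqref{payoff_cond} fails under both assignments of $(k_{\min},k_{\max})$; yet the non-ZD memory-one strategy $p^{\mathbf{k}}\equiv 1/4$ makes player~$X$'s expected single-round payoff equal to $5/2$ for either value of $n_2$, and hence fixes $u_X=5/2$ against every opponent. So the paper's restricted reading is the correct scope, and your necessity paragraph should be replaced by the observation that the case split on the sign of $b$ exhausts the ZD family.
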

\begin{proof}
We need to obtain $a$ and $b$ that satisfy $\mathbf{\tilde{m}_X} = a\mathbf{X}+ b$ and render $p^{1,1}, p^{1,2}, p^{2,1}, \text{and}~p^{2,2}$ as probabilities. First note that, by~(\ref{Press-theorem}), $a = -\frac{b}{u_X}$ and thus formulae~(\ref{eq.cond1}--\ref{eq.cond4}) follow. Next, we obtain the range of valid values of the non-zero variable $b$ by dividing the search domain into two intervals; $b>0$ and $b<0$.

\subsection*{\underline{Case 1 $(b>0)$}:}
Consider ~(\ref{eq.cond1}) and~(\ref{eq.cond2}) and notice that, for any value of $b > 0$ and a given value of $u_X$, the condition
\[
X_{1,min} \geq u_X
\]
is a sufficient and necessary condition for $p^{1,1}$ and $p^{1,2}$ to be less than or equal to~$1$.
Similarly, for $p^{2,1}$ and $p^{2,2}$ to be greater than or equal to~$0$, we obtain the following condition from~(\ref{eq.cond3}) and~(\ref{eq.cond4})
\[
X_{2,max} \leq u_X.
\]
Therefore,  $u_X$ cannot be fixed at values outside the interval $[X_{2,max}, X_{1,min}]$. To show that $u_X$ can be fixed at any value in this interval, we need to show that there exist $b>0$ such that $p^{1,1}$ and $p^{1,2}$ are greater than or equal to $0$ and such that $p^{2,1}$ and $p^{2,2}$ are less than or equal to $1$. In this regard, from~(\ref{eq.cond1}) and~(\ref{eq.cond2}) we obtain
\begin{eqnarray}
 0 < b &\leq&  \frac{-1}{1- \frac{X_{1,1}}{u_X} }, \nonumber \\
 0 < b &\leq&  \frac{-1}{1- \frac{X_{1,2}}{u_X} }. \nonumber
\end{eqnarray}
Note that, since  $X_{1,1}, X_{1,2} > u_X$, the tightest upper bound is $\frac{-1}{1- \frac{X_{1,max}}{u_X} }$.
In the same way, from~(\ref{eq.cond3}) and~(\ref{eq.cond4}) we obtain
\begin{eqnarray}
 0 < b &\leq&  \frac{1}{1- \frac{X_{2,1}}{u_X} }, \nonumber \\
 0 < b &\leq&  \frac{1}{1- \frac{X_{2,2}}{u_X} }, \nonumber
\end{eqnarray}
and since  $  X_{2,1}, X_{2,2} < u_X $, $\frac{1}{1- \frac{X_{2,min}}{u_X} }$ is the tightest upper bound. Therefore,  $u_X$ can be fixed at any value in the interval $[X_{2,max}, X_{1,min}]$ by choosing $b$ from the following feasible range
\[ 0 < b \leq \min\left(\frac{-1}{1- \frac{X_{1,max}}{u_X}},\frac{1}{1- \frac{X_{2,min}}{u_X}}\right). \]
\subsection*{\underline{Case 2 $(b<0)$}:}
We can follow the same steps in the previous case. Namely, for $p^{1,1}$ and $p^{1,2}$ to be less than or equal to~$1$, it is required that
\[
X_{1,max} \leq u_X,
\]
and for  $p^{2,1}$ and $p^{2,2}$ to be greater than or equal to $0$, it is required that
\[
X_{2,min} \geq u_X.
\]
Combining the previous conditions yields the new condition
\[
X_{1,max} \leq X_{2,min},
\]
Furthermore, for $p^{1,1}$ and $p^{1,2}$ to be greater than or equal to~$0$, we obtain the conditions
\[  \frac{-1}{1- \frac{X_{1,min}}{u_X} }  \leq b < 0.  \]
In a similar fashion, for $p^{2,1}$ and $p^{2,2}$ to be less  than or equal to~$1$ we obtain
\[  \frac{1}{1- \frac{X_{2,max}}{u_X} }  \leq b < 0.  \]
Therefore, $b$ can be chosen from the following feasible range
\[ \max\left(\frac{-1}{1- \frac{X_{1,min}}{u_X}},\frac{1}{1- \frac{X_{2,max}}{u_X}}\right) \leq b < 0. \]
\end{proof}

%

Theorem~\ref{thm:fixing} provides a framework for understanding payoff control in $2\times2$ iterated games. It states that the structure of the payoff matrix reveals the possibility of players controlling their long-term payoffs. In fact, only if the maximum payoff in one row is less than or equal to the minimum in the other row, then row player~$X$ can set the long-term payoff, $u_X$, at any value between the minimum and the maximum, inclusive. For example if $X_{1,1}=1, X_{1,2}=0.75$, and $X_{2,1}=X_{2,2}=0.5$, then player~$X$ can set $u_X$ at any value in the interval $[0.5,0.75]$.

The results in the theorem can be directly applied to player $Y$ by considering the columns of the payoff matrix of the player instead of the rows. In particular, let
\begin{eqnarray*}
Y_{k,\min} = \min(Y_{1,k}, Y_{2,k})  & \mbox{and} &
Y_{k,\max} = \max(Y_{1,k}, Y_{2,k}),
\end{eqnarray*}
then player $Y$ can control its long-term payoff, $u_Y$, if and only if there exists $k_{\max}, k_{\min} \in \{1,2\}$, where $k_{\max} \neq k_{\min}$ and
$Y_{k_{\max},\max} \leq Y_{k_{\min},\min}$.

A simple example to verify strategies~(\ref{eq.cond1}--\ref{eq.cond4}) in the theorem is to choose $u_X = X_{k_{\min},\min}$, meaning that player $X$ sets $u_X$ at the maximum value possible. If we assume that $X_{k_{\min},\min}=X_{1,1}$, then regardless of the value of $b$, this always yields a strategy with $p^{1,1}=1$, i.e., player $X$ plays action~$1$ whenever both players played this action in the previous round. One way to understand this result is to consider a strategy of player $Y$ playing action~$1$ in each round of the game. Player $X$ will be then playing action~$1$ in each round as there will be no opportunity to make up for losses that may result in from not playing that action in any of the previous rounds.

In the same example, assume that the ratio $X_{1,1}/X_{1,2}=0.5$ and assume that $X_{2,1}=X_{2,2}=0$. If $b=1$, then this yields the deterministic strategy $p^{1,1}=1$, $p^{1,2}=0$, $p^{2,1}=1$, and $p^{2,2}=1$. This strategy is quite intuitive since it tracks the payoff of player $X$ such that, regardless of the strategy of player $Y$, whenever the payoff in any round is $X_{1,2}$, player $X$ plays action $2$ in the next round and gains $0$ payoff, so that the average of the two rounds is maintained at the targeted value $X_{1,1}$. In the next round, the player plays action $1$ to gain at least $X_{1,1}$ and so on. The strategy is one of several strategies that can be obtained by changing the value of the variable~$b$, and which will be discussed in more details in Section~\ref{sec:numerics}.

Two important observations can be carried from Theorem~\ref{thm:fixing}. First, the players can design their strategies without an underlying assumption of knowledge of each other's payoffs. All that a player needs to know about the opponent at any round is the latter's action in the previous round. This leads us to the second observation which highlights a more general perspective of this theorem. In essence, if the structure of the payoff matrix of the opponent admits the same structure described in the theorem, then a player can control the long-term payoff of the opponent. For example, player $X$ can control the payoff of player $Y$ if the payoff matrix of player $Y$ satisfies conditions~(\ref{payoff_cond}) with $X_{i,j}$ replaced by $Y_{i,j}$.

Controlling opponent's payoff can be realized in games such as the iterated Prisoners' Dilemma where $Y_{1,2} > Y_{1,1} > Y_{2,2} > Y_{2,1}$. In this game, the row player can set the payoff of the column player at any value in the interval $[Y_{1,1}, Y_{2,2}]$. Strategies for opponents controlling  each other's payoffs were previously studied in~\cite{Boerlijst} and presented for a subset of games where players have symmetric payoffs  as in the case of the Prisoners' Dilemma game.

\subsection{Iterated Games with Multiple Players }
The results presented in the previous section can be extended to include games of more than two players. Let $N \geq 2$ denote the number of players in the game and assume they are indexed $1,2,\cdots, N$. Let the binary vector $\mathbf{n}(t)=(n_i(t): i=1,\cdots, N)$ describe the state of the game in a given round $t$, where $n_i(t)\in\{1,2\}$ for all $i,t$
so that at any given $t$, $\mathbf{n}(t) \in \{1,2\}^N =:\Omega$. The process $\{\mathbf{n}(t): t=0,1,\cdots\}$ can be described as a multi-dimensional Markov chain. In each round of the game, players take actions with probabilities that depend on the state of the game in the previous round.

Let $p_{i}^{\mathbf{k}}$ denote the probability that player $i$  plays action $1$ in a certain round if the game was in state $\mathbf{k}$ in the previous round, and let
\[
\mathbf{p}_i=(p_{i}^{\mathbf{k}}: \mathbf{k}\in \Omega)
\]
denote the complete strategy profile of player $i$. The state transition matrix of the $N$-player game can be presented as a $2^N\times 2^N$ matrix.
Similar to the game with $N=2$ players, we can apply Cramer's rule to
$\mathbf{\tilde{M}}= \mathbf{M}-\mathbf{I}$ and replace  the last
column (corresponding to all players taking action $2$)
with a ``reward" vector {\bf f}.
For ${\bf l,k}\in \Omega$,  the entry in the
${\bf k}^{\rm th}$ row and ${\bf l}^{\rm th}$ column  of
$\mathbf{\tilde{M}}$  is for all iterations $t\geq 0$
\[
{\sf Pr}\left({\bf n}(t+1)={\bf k} ~|~ {\bf n}(t)={\bf l}\right)  =
\prod_{i\in \mathcal{K}^{\mathbf{k}}}
p_i^{\mathbf{l}}
\prod_{j \in \mathcal{L}^{\mathbf{k}}}
(1-p_j^{\mathbf{l}})
\]
where $\mathcal{K}^{\mathbf{k}}$ is the set of players playing action $1$
in state $\mathbf{k}$ and $\mathcal{L}^{\mathbf{k}}$ is the set of
players playing action $2$ in state $\mathbf{k}$.

Consider adding all columns $\mathcal{C}_i\subset \Omega$ of $\tilde{{\bf M}}$ which correspond
to states where player $i$ and at least one other player plays action $1$,
to the column where only player $i$ plays action $1$. An entry of the resulting column ${\bf \tilde{m}}_i$
at row $\mathbf{k}$ is then given as
\[
\left\{\begin{array}{ll}
-1 + p_i^{\mathbf{k}}\Gamma  &\mbox{if a diagonal element of
$\mathbf{\tilde{M}}$ is added to this entry,}\\
p_i^{\mathbf{k}}\Gamma  &\mbox{otherwise,}
\end{array}\right.
\]
where

\[
\Gamma = \sum_{{\bf k}\in\mathcal{C}_i} \prod_{j\in\mathcal{K}^{{\bf k}}\backslash i} p_j^{{\bf k}}
\prod_{l\in\mathcal{L}^{{\bf k}}} (1-p_l^{{\bf k}}).
\]

An important observation is that $\Gamma=1$ since each product in $\Gamma$ has elements that are either the probability or its complement of a fixed set of events. Thus, the sum of all possible permutations of these products evaluate to 1. Therefore, in a similar reasoning that led to~(\ref{eq.det}), multiplying the stationary distribution of the game \mbox{\boldmath${\pi}$} with an arbitrary $|\Omega|$-size vector $\mathbf{f}$ leads to the following structure (also displaying ${\bf \tilde{m}}_i$)
\beq
{\mbox{\boldmath${\pi}$}}^{\rm T} \mathbf{f} =
{\sf det} \left( \begin{array}{cccc}
\cdots & -1+p_i^{\mathbf{k}} & \cdots & f_1\\
\cdots & -1+p_i^{\mathbf{k}} & \cdots & f_2 \\
\ddots& \vdots & \ddots & \vdots\\
\cdots & p_i^{\mathbf{k}} & \cdots & f_{|\Omega|-1} \\
\cdots & p_i^{\mathbf{k}}& \cdots & f_{|\Omega|} \end{array} \right). \nonumber
\eeq
So, a column that corresponds to the state where only player $i$
plays action $1$ has elements that depend solely on the actions of that player.

We follow the developments that led to Theorem~\ref{thm:fixing} and let $U_{i,\mathbf{n}}$ denote the payoff of player $i$ if the state of the game at the previous round was $\mathbf{n}$. We also let $\mathbf{U}_i$ denote a vector of all possible outputs. Let $u_i$ denote a generic value of the long-term payoff of player $i$. Thus, taking ${\bf \tilde{m}}_i = {\bf f}= a_i\mathbf{U}_i + b_i$, where $a_i$ and $b_i$ are non-zero real numbers leads to zero-determinant strategies for own payoff control, which is formulated in the following proposition:

\begin{prop} \label{prop:multi}
In the game with $N\geq 2$ players, for k=1,2, let
\begin{eqnarray}
U_{i,k,\min} &=& \min(U_{i,\mathbf{n}}: n_i=k), \nonumber  \\
U_{i,k,\max} &=& \max(U_{i,\mathbf{n}}: n_i=k)), \nonumber
\end{eqnarray}
where the first quantity is the minimum payoff of player $i$ when playing action $k$, and the second quantity is the maximum.

Player $i$ can control its long-term payoff, $u_i$, regardless of the actions of the other players in the game
if and only if there exists $k_{\max}, k_{\min} \in \{1,2\}$ where
\beq
U_{i,k_{\max},\max} \leq U_{i,k_{\min},\min}. \nonumber
\eeq
If so, any value of $u_i$ from the interval $[U_{i,k_{\max},\max},
X_{i,k_{\min},\min}]$ can be achieved by using the following
strategies:
\[
p_i^{\mathbf{k}}=  \left\{\begin{array}{ll}
 1 + (1-\frac{U_{i,\mathbf{k}}}{u_i})b_i, &\mbox{if player $i$ plays action $1$ in state $\mathbf{k}$,} \\
(1-\frac{U_{i,\mathbf{k}}}{u_i})b_i,&\mbox{otherwise,}
\end{array}\right.
\]
where  $b_i$ is chosen such that, \newline if $k_{\min} = 1$ and $k_{\max} = 2$, then
\[
 0 < b_i \leq \min\left(\frac{-1}{1- \frac{U_{i,1,max}}{u_i}},\frac{1}{1- \frac{U_{i,2,min}}{u_i}}\right),
\]
and if $k_{\min} = 2$ and $k_{\max} = 1$, then
\[
 \max\left(\frac{-1}{1- \frac{U_{i,1,min}}{u_i}},\frac{1}{1- \frac{U_{i,2,max}}{u_i}}\right) \leq b_i < 0.
\]
\end{prop}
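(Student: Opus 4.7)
The plan is to follow the same template used in Theorem~\ref{thm:fixing}, exploiting the determinant identity already established in the excerpt together with the key observation that $\Gamma = 1$.

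First I would start from the identity that $\mbox{\boldmath$\pi$}^{\rm T}\mathbf{f}$ equals the determinant of $\tilde{\mathbf{M}}$ with the specially constructed column $\tilde{\mathbf{m}}_i$ replaced by $\mathbf{f}$; this rests on the observation established just above the proposition that the entries of $\tilde{\mathbf{m}}_i$ depend only on player $i$'s own strategy $\mathbf{p}_i$. Choosing $\mathbf{f} = a_i \mathbf{U}_i + b_i \mathbf{1}$ makes the resulting matrix have two linearly dependent columns, so its determinant vanishes. Meanwhile $\mbox{\boldmath$\pi$}^{\rm T}(a_i \mathbf{U}_i + b_i \mathbf{1}) = a_i u_i + b_i$, whence $a_i = -b_i / u_i$. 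Equating the $\mathbf{k}$th entry of $\tilde{\mathbf{m}}_i$ with $a_i U_{i,\mathbf{k}} + b_i$ then produces the two claimed formulas for $p_i^{\mathbf{k}}$ directly: the ``$-1+$'' offset is present exactly when $k_i=1$, i.e., when player $i$ plays action~$1$ in the state indexing that row.

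Next I would determine the feasible pairs $(u_i,b_i)$. For $b_i > 0$, the constraint $p_i^{\mathbf{k}} \leq 1$ on rows with $k_i=1$ forces $U_{i,\mathbf{k}} \geq u_i$ for every such $\mathbf{k}$, while $p_i^{\mathbf{k}} \geq 0$ on rows with $k_i=2$ forces $U_{i,\mathbf{k}} \leq u_i$ for every such $\mathbf{k}$. Combined, this yields $u_i \in [U_{i,2,\max},U_{i,1,\min}]$, a nonempty interval iff $U_{i,2,\max}\leq U_{i,1,\min}$, which is exactly the case $k_{\min}=1$, $k_{\max}=2$ of the proposition. The remaining inequalities $p_i^{\mathbf{k}} \geq 0$ on action-$1$ rows and $p_i^{\mathbf{k}} \leq 1$ on action-$2$ rows translate, after tightening the bound over $\mathbf{k}$, into the stated upper bound on $b_i$. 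The case $b_i < 0$ is wholly symmetric and produces $k_{\min}=2$, $k_{\max}=1$ together with the mirrored range on $b_i$. Necessity of condition~(\ref{payoff_cond}) (with $U_{i,\cdot}$ in place of $X$) then follows because at least one of these two sign regimes must hold whenever a zero-determinant own-payoff strategy exists.

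The main obstacle in writing the full proof is just the bookkeeping of which rows of $\tilde{\mathbf{m}}_i$ carry the $-1$ offset and which do not, together with tracking the sign of $1 - U_{i,\mathbf{k}}/u_i$ when dividing through to bound $b_i$. The core algebraic content---namely the identity $\Gamma=1$ and the Cramer's-rule setup---has already been supplied in the preceding paragraphs, so the argument ultimately reduces to verifying that the two-player proof of Theorem~\ref{thm:fixing} transfers line-for-line once the roles of ``row $1$'' and ``row $2$'' are replaced by ``$k_i=1$'' and ``$k_i=2$''. No new combinatorial input about $\tilde{\mathbf{M}}$ is required beyond what the derivation of $\tilde{\mathbf{m}}_i$ already provides.
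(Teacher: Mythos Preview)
Your proposal is correct and matches the paper's own treatment. The paper does not give a separate proof environment for Proposition~\ref{prop:multi}; it instead supplies the determinant/Cramer setup and the $\Gamma=1$ observation in the paragraphs immediately preceding the proposition, then states that ``taking ${\bf \tilde{m}}_i = {\bf f}= a_i\mathbf{U}_i + b_i$ \ldots\ leads to zero-determinant strategies for own payoff control,'' leaving the feasibility analysis to be read off from the proof of Theorem~\ref{thm:fixing} with ``row $k$'' replaced by ``$n_i=k$.'' Your write-up fills in exactly those details and adds nothing foreign to the paper's argument.
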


\section{A Non-Cooperative Game for Sharing Licensed Spectrum} \label{sec:sharing}
In this section, we apply our results to design strategies for sharing licensed spectrum bands. We consider a general model for spectrum sharing that involves $N$ service providers indexed $i=1,2,\cdots, N$ sharing a channel of bandwidth $W$. We assume that the channel is primarily licensed to a single entity that we refer to as the license holder. We consider a cold leasing model where the license holder may not deploy any network or equipment, and thus, offering the channel to the service providers without access coordination. This model is one of several models that have been suggested for treating spectrum bands as private commons where the ultimate ownership of spectrum is preserved by the license holder (See for example~\cite{Buddhikot}).

We model the underlying communication system as an interference channel where at times when the channel is less congested, the service providers create less interfere to each other, and thus can achieve better throughput rates. We focus on the downlink and assume that the service providers have fixed pools of end-users co-located within a certain geographical area. See Figure~\ref{fig:model} for a description of this model.  Let $\mathcal{S}_i$ denote the set of end-users of service provider $i$. The license holder regulates channel access by imposing a limit on the maximum transmission power of each service provider. It also allocates the underlying code space for transmission to individual end-users. Power and code allocations are normally negotiated with the license holder and provided through ``secondary provider" contracts.

\begin{figure}[]
\centering
\includegraphics[width=3.5in]{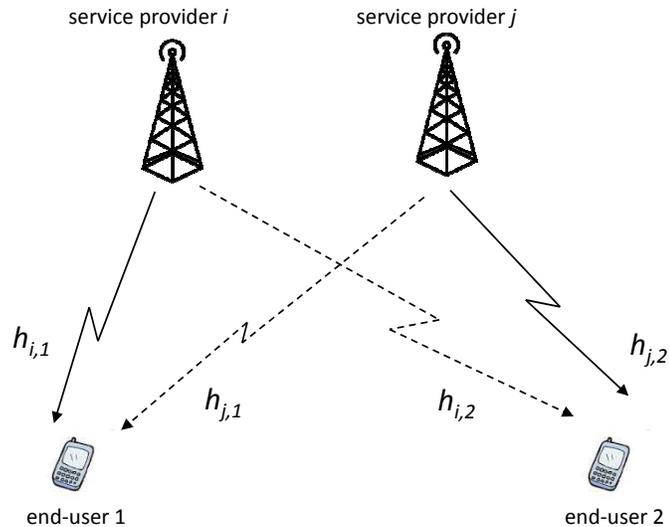} \vspace*{-.5cm}
\caption{A channel access model where a number of service providers share a channel and provide downlink services to groups of end-users spatially located in over-lapping coverage areas. Each end-user receives service for its designated service provider, but also gets interference from other service providers.}
\label{fig:model}
\end{figure}

We follow a simple model of common-channel interference under CDMA where transmission of a service provider to a given end-user appears as noise
to all other end-users, including those belonging to other service providers. While interference cancellation techniques can be still applied, they are precluded in this model due to practical limitations such as decoder complexities and delay constraints. Similar assumptions have been widely used in the literature of interference channels, see for example~\cite{Etkin}.

The service providers use power control to maintain certain throughput via controlling their transmission power levels on the downlinks. In interference channels, an increase in the transmission power on one of the downlinks causes interference on the other links and thus a degradation in the Signal to Interference and Noise Ratio (SINR) at the receiving sides of those links. In this regard,
let $\Lambda_{i,\max}$ denote the maximum transmission power allocated to service provider~$i$. A power control scheme of a service provider specifies the transmission power allocated to each end-user on the downlink.
Let the vector $\mbox{\boldmath$\lambda$}_i= (\lambda_{i,1}, \lambda_{i,2}, \cdots \lambda_{i,{|\mathcal{S}_i|}}) $ denote the power control scheme of service provider $i$. If all service providers transmit at their maximum power levels, the SINR of end-user $k\in\mathcal{S}_i$ can be represented by
\[
\gamma_{i,k}(\mbox{\boldmath$\lambda$}_i)= \frac{\lambda_{i,k}h_{i,k}}{\sigma_k + h_{i,k}(\Lambda_{i,\max} - \lambda_{i,k}) + \sum_{j\neq i}h_{j,k}\Lambda_{j,\max}},
\]
where $h_{i,k}$ is the path gain between the base station of service provider $i$ and end-user $k$,\footnote{In fact, it is path attenuation since $h_{i,k}<1$. In practice,  path attenuations may be obtained
by use of pilot signals or may not be explicitly discovered if the power control
mechanism is performed by adaptive dithering.} and $\sigma_k$ is the noise power at end-user $k$~\cite{Molisch}.
The achievable throughput rate at the downlink of user $k\in \mathcal{S}_i$
can be obtained using Shannon's formula
\beq
r_{i,k}(\mbox{\boldmath$\lambda$}_i)=W \log_2(1+\gamma_{i,k}(\mbox{\boldmath$\lambda$}_i)), \label{rate}
\eeq
and the aggregate rate on the downlink of service provider $i$ is thus given by
\beq
R_i(\mbox{\boldmath$\lambda$}_i)=\sum_{k\in \mathcal{S}_i}r_{i,k}(\mbox{\boldmath$\lambda$}_i). \label{rate_aggrg}
\eeq

It is plausible to measure the utilities of service providers from sharing the channel by the quality of service they provide on the downlinks. One important measure in this regard is the average delay of packet delivery, which can be reduced by improving the rate on the downlink. Let $\bar{R}_i$ be the long-term average downlink rate of service provider~$i$. We denote the utility of the service provider by the function $\mathcal{U}_i(.)$ which is strictly increasing  in $\bar{R}_i$.

A distinctive feature of secondary utilization of licensed spectrum bands is that the utility of secondary users, i.e., service providers, is discounted by some cost paid in the form of a fee to the license holder. See for example~\cite{Daoud},~\cite{Sarkar},~\cite{Mutlu} for studies that involve economic models and pricing techniques for secondary spectrum utilization. Here, we consider a pricing scheme where the license holder charges the service providers on usage basis per unit data transmitted on the downlinks. Let $c_i$ denote the price charged to service provider~$i$ per unit data transmitted on the channel. Thus, the optimal aggregate rate $\bar{R}^*_i$ of the service provider is a solution of the optimization problem
\begin{equation*}
\max_{\bar{R}_i}~~
\mathcal{U}_i(\bar{R}_i) - c_i\bar{R}_i,
\end{equation*}
which has a unique solution if $\mathcal{U}_i$ is concave.

From the standpoint of service provider~$i$, achieving $\bar{R}^*_i$ requires the service provider to transmit at a certain power level taking into consideration the interference created by other service providers. In the light of lack of central coordination, some service providers may unpredictably change their transmission power levels to adapt their rates according to their demand, thus causing variations in the interference to the other service providers. Sharing an interference channel with users that transmit at varying power levels can be modeled as a non-cooperative iterated game, where it can be assumed that the channel is offered to the service providers in rounds. In each round, the service providers choose their transmission power levels, which can vary from round to round according to their anticipated demand levels. Similar models for sharing interference channels have been considered, for example, in~\cite{Etkin}.

We refer to this game as the iterated power control game. In each round of the game, the service providers adapt their power levels based on some history of the game to maintain $\bar{R}_i^*$ regardless of the power control strategy of the opponents. The theory of zero-determinant strategies presented in Section~\ref{sec:theory} helps provide guidelines for power control in such environments that involve uncoordinated spectrum access. In the sequel, we present the iterated power control game for the case of two service providers, where the service providers fix their power allocation schemes $\mbox{\boldmath$\lambda$}_i$, but take binary decisions in each round on whether or not to engage their users. We identify the range of values of $\bar{R}_i^*$ that can be achieved and characterize strategies for achieving these values.



\subsection{Iterated Power Control Game with Two Service Providers and Binary Action Space}
Consider an iterated power control game with two service providers labeled $1$ and $2$. The payoff matrix of the single shot game is shown in Figure~\ref{fig:game}. For ease of exposition, we will assume that in each round of the game the service providers choose either to transmit at a certain power level or at zero power. That is, the service providers can choose between two actions: either to access or not to access the channel. We assume that if one service provider accesses the channel, it achieves downlink rate $R_1$ if it is service provider $1$ and rate $R_2$ if it is service provider $2$. Both $R_1$ and $R_2$ are described by~(\ref{rate_aggrg}). If both service providers access the channel, then potentially both achieve lower rates $\theta_i R_i$, $i=1,2$ with $0 < \theta_i < 1$. Trivially, the service provider that does not access the channel in any given round  achieves no rate.\footnote{Here, we assume a stationary model of user path gains. Namely, mobility of users may result in time-varying path gains as when a user enters a deep fade region or when handed off to a neighboring base-station of the same provider or to another provider in the same region. To handle such transients, the power control iterations need to be performed on a faster time-scale.}
However, the game can be extended to include power values that are not necessarily limited to $0$ and $\Lambda_{i,\max}$. Furthermore, extensions to more than two service providers follow directly from the generalization in Proposition~\ref{prop:multi}.


\begin{figure}[]
\centering
\begin{tabular}{c|c|c}
\backslashbox{provider 1}{provider 2}
& Access & No Access \\ \hline
Access &  $(\theta_1R_1,\theta_2R_2)$& $(R_1,0)$\\ \hline
No Access &  $(0,R_2)$ &  $(0,0)$
\end{tabular}
\caption{Payoff matrix of the power control game. Service providers~$1$ and~$2$ achieve rates $R_1$ and $R_2$, respectively, if they access the channel solely. If both providers access the channel simultaneously, then provider $i$ achieves $\theta_iR_i$.}
\label{fig:game}
\end{figure}

The payoff matrix of this game has a structure identified by Theorem~\ref{thm:fixing}. It allows any of the service providers to exert control on its rate. Specifically, let $1$ denote \emph{access} and $2$ denote \emph{no access}.
In any round, from the perspective of service provider~$i$, the game can be in one of four possible states given by the set
\[
\Omega_i=\{(1,1), (1,2), (2,1), (2,2)\},
\]
where the first element of a tuple refers to an action by service provider~$i$ and the second element refers to an action by the other service provider.
Let
$n_i(t) \in \{1, 2\}$ denote an action by service provider $i$ at round $t$ of the game. Also let ${\bf n}(t)=(n_1,n_2)$ and let
\[
p^{{\bf k}}_i = {\sf Pr}\left(n_i(t+1)=1|{\bf n}(t) ={\bf k}\right),~~\forall {\bf k}\in\Omega_i.
\]
Therefore, following the results in Theorem~\ref{thm:fixing}, service provider~$i$ can fix its long-term rate, $\bar{R}_i$, at any value
 in the interval $(0, \theta_iR_i]$
by accessing the channel in each round of the game according to the
following policy
\begin{eqnarray}
p^{1,1}_i &=& 1 + (1 -\theta_i R_i/\bar{R}_i)b_i, \label{eq.share1} \\
p^{1,2}_i &=& 1 + (1 -R_i/\bar{R}_i)b_i, \label{eq.share2} \\
p^{2,1}_i &=& b_i, \label{eq.share3} \\
p^{2,2}_i &=& b_i, \label{eq.share4}
\end{eqnarray}
where $b_i$ is chosen such that
\[
0  < b_i \leq \frac{1}{|1-R_i/\bar{R}_i|}.
\]

Obtaining values of $R_1, R_2, \theta_1, \theta_2$, which identify the range of possible fixations of the outcome of the game and the associated access strategies, hinges on the underlying power allocation scheme, $\mbox{\boldmath$\lambda$}_i$, applied by the service providers on the downlinks. In the following, we derive lumped parameters for computing these values for the max-min power allocation scheme that maximizes the minimum rate on the downlinks. Specifically, for service provider $i$, the max-min scheme requires solving the following optimization problem
\begin{equation*}
\begin{aligned}
&\underset{\mbox{\boldmath$\lambda$}_i}\max~\underset{k\in \mathcal{S}_i}\min~~
& & \gamma_{i,k}(\mbox{\boldmath$\lambda$}_i) \\
& \text{subject to}
& & \sum_{k\in \mathcal{S}_i} \lambda_{i,k}=\Lambda_{i,\max},
\end{aligned}
\end{equation*}
where it is implied that the service providers transmit at the maximum allowed power $\Lambda_{i,\max}$. A solution of this problem results in equal rates $r_i$ on all the downlinks of service provider~$i$.

To compute $R_i$, consider the case where only service provider~$i$ accesses the channel. In this case, the rate achieved at any of the downlinks $k \in \mathcal{S}_i$ is given by
\[
r_i =W\log_2\left(1+\frac{\lambda_{i,k}h_{i,k}}{\sigma_k + h_{i,k}(\Lambda_{i,\max} - \lambda_{i,k}) }\right),
\] and thus $R_i = |\mathcal{S}_i|r_i$.
Equal rates can be maintained on all the downlinks by choosing $\lambda_{i,k}$ and $\lambda_{i,l}$ for all $k,l \in \mathcal{S}_i$ such that
\[
\frac{h_{i,k} \lambda_{i,k}}{\sigma_k + h_{i,k}(\Lambda_{i,\max} - \lambda_{i,k})} = \frac{h_{i,l} \lambda_{i,l}}{\sigma_l + h_{i,l}(\Lambda_{i,\max} - \lambda_{i,l})},
\]
which can be equivalently written as
\[
\frac{h_{i,k} \lambda_{i,k}}{\sigma_k + h_{i,k}\Lambda_{i,\max}} = \frac{h_{i,l} \lambda_{i,l}}{\sigma_l + h_{i,l}\Lambda_{i,\max}} =:K,~~~\forall k,l \in \mathcal{S}_i.
\]
Thus,
\[
\lambda_{i,k} = K\left(\frac{\sigma_k}{h_{i,k}}+\Lambda_{i,\max}\right), ~~~\forall k \in \mathcal{S}_i.
\]
Note that $\Lambda_{i,\max}=\sum_{k\in \mathcal{S}_i}\lambda_{i,k}$
and therefore, for maximum $R_i$,
\[
K = \frac{\Lambda_{i,\max}}{\sum_{k\in\mathcal{S}_i}\frac{\sigma_k}{h_{i,k}} +
|\mathcal{S}_i|\Lambda_{i,\max}}.\]

$\theta_i$ can be computed in a similar fashion by considering both service providers $i$ and~$j$ simultaneously transmitting on the channel and taking into consideration the interference they create to each other. In this case, let $\tilde{r}_i$ denote the rate on each downlink of service provider $i$. Thus,
$\theta_iR_i = |\mathcal{S}_i|\tilde{r}_i$ where
for all $k\in\mathcal{S}_i$,
\begin{eqnarray*}
\tilde{r}_i =
W\log_2\left(1+\frac{\lambda_{i,k}h_{i,k}}{\sigma_k + h_{i,k}(\Lambda_{i,\max} - \lambda_{i,k}) + h_{j,k}\Lambda_{j,\max}}\right).
\end{eqnarray*}
The power distribution on the downlinks can be obtained by equalizing all the rates. Thus,
\[
\lambda_{i,k} = \tilde{K}\left(\frac{\sigma_k}{h_{i,k}}+\Lambda_{i,\max} +
\Lambda_{j,\max}\frac{h_{j,k}}{h_{i,k}} \right), ~~~\forall k \in \mathcal{S}_i,\]
where
\[
\tilde{K} = \frac{\Lambda_{i,\max}}
{\sum_{k\in\mathcal{S}_i} \frac{\sigma_k+\Lambda_{j,\max}h_{j,k}}{h_{i,k}}
+ |\mathcal{S}_i|\Lambda_{i,\max}
}.
\]



\section{Numerical Study} \label{sec:numerics}
In this section, we provide insight into the zero-determinant strategies for the $2\times 2$ game described in Figure~\ref{fig:game}. The structure of these strategies is given by formulae~(\ref{eq.share1})-(\ref{eq.share4}). Without loss of generality, we consider a symmetric game with $R_1=R_2=1.0$ and $\theta_1=\theta_2=0.5$. Here, each service provider can fix $\bar{R}_i$ to values in the range $(0,0.5]$. From the standpoint of service provider 1, i.e., the row player, the zero-determinant strategies $(p^{1,1}, p^{1,2}, p^{2,1}, p^{2,2})$ for a given $\bar{R}_1$  have the following structure
\beq
\left( 1 + (1-\frac{0.5}{\bar{R}_1})b_1,~1 + (1- \frac{1}{\bar{R}_1})b_1 ,~b_1 ,~b_1 \right),~\label{strategy}
\eeq
where
\[
0  < b_1 \leq \frac{1}{|1-1/\bar{R}_i|}.
\]

\subsection{Convergence of the Zero-Determinant Strategies}
First, consider the deterministic strategy $(1,0,1,1)$ which corresponds to $b_1=1$ and which allows the service provider to achieve a long-term average rate $\bar{R}_1=0.5$. Assume the strategy is played against service provider 2 which accesses the channel in each round with probability equal to  $1/2$. Figure~\ref{fig.comparison} shows the average rate of service provider $1$ at different rounds of the game as it converges, in the long term, to the value~$0.5$. Convergence paths are also provided for the strategies $(2/3, 0, 1/3, 1/3)$ and $(5/9, 0, 1/9, 1/9)$ which lead, respectively, to $\bar{R}_1=0.25$ and $\bar{R}_1=0.1$.

\begin{figure}
\centerline{{\includegraphics[width=5in]{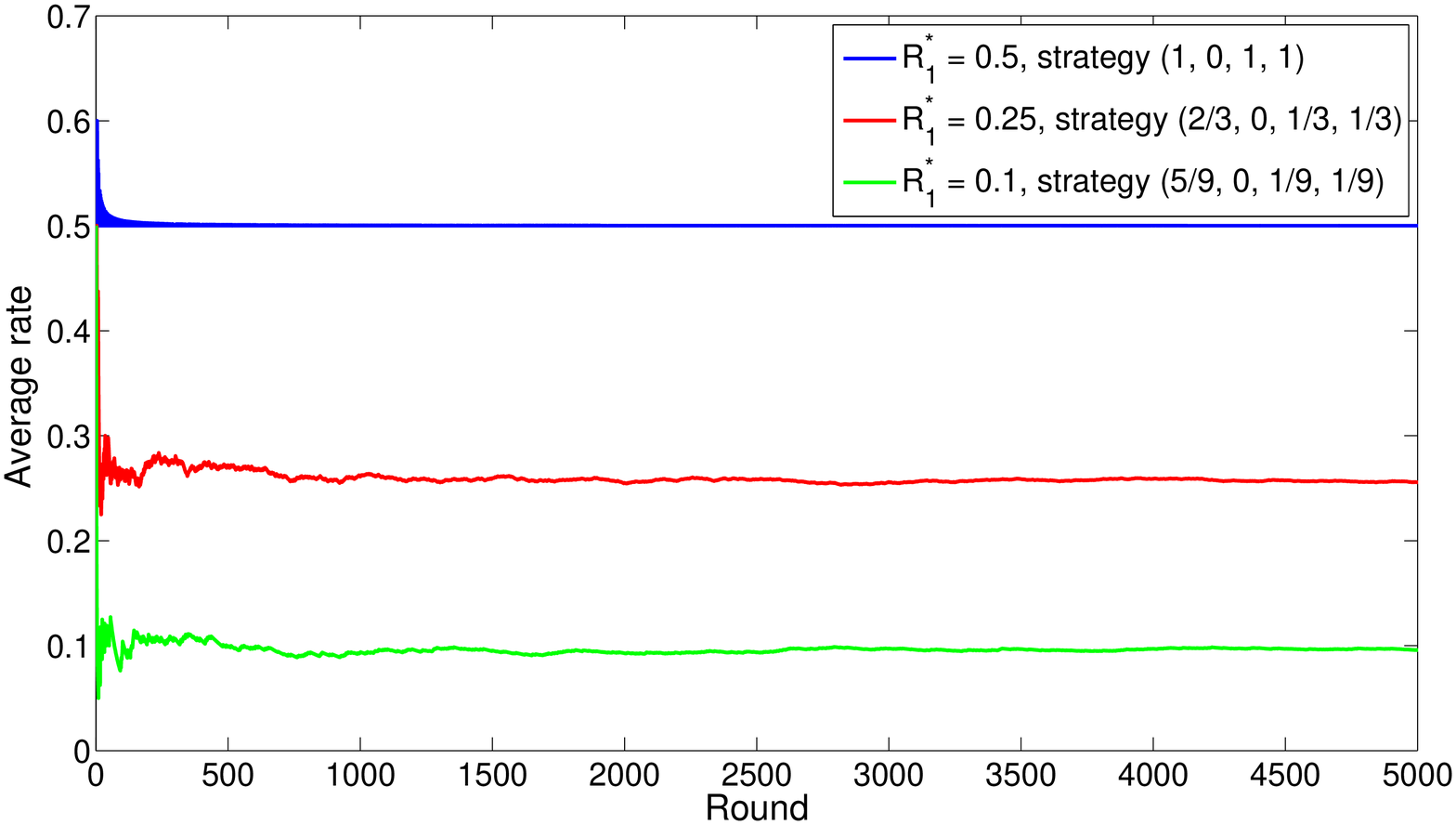}
\vspace*{-.25cm} }}\caption{Rate convergence under zero-determinants strategies in a $2\times2$ channel sharing game with $R_1=R_2=1.0$ and $\theta_1=\theta_2=0.5$. Service provider 2 (column player) uses a random strategy with probability of access = $1/2$ in each round.} \label{fig.comparison}
\end{figure}

A common factor of these strategies is that $p^{1,2}=0$, which means that if the service provider ends up using the channel alone in any round, it will not access the channel in the next round. Strategies with this property can be obtained by setting $b_1$ at the highest possible value. These rectifying strategies guarantee that there will be no long time periods of deviation from $\bar{R}_1$ which explains why all the previous strategies converge relatively quickly to $\bar{R}_1$.


Figure~\ref{fig:convergence} shows convergence paths of different strategies that achieve $\bar{R}_1=0.5$ including the strategy (1,0,1,1). All the strategies are played against the same service provider that has a probability of channel access equal to $1/2$. Note that as $p^{1,2}$ increases, strategies take longer time to converge. This is due to the fact that if the service provider accesses the channel in one round, then as $p^{1,2}$ increases, it is more likely that the service provider will access the channel in the next round, and thus, it is more likely to deviate more from  $\bar{R}_1$. In the meantime, an increase in $p^{1,2}$ is accompanied with a decrease in $p^{2,1}$ and $p^{2,2}$ by formula~(\ref{strategy}). This leads to long periods of channel access that are followed by long periods of no channel access and thus, the strategy tends to converge relatively slowly.

\begin{figure}
\centerline{{\includegraphics[width=5in]{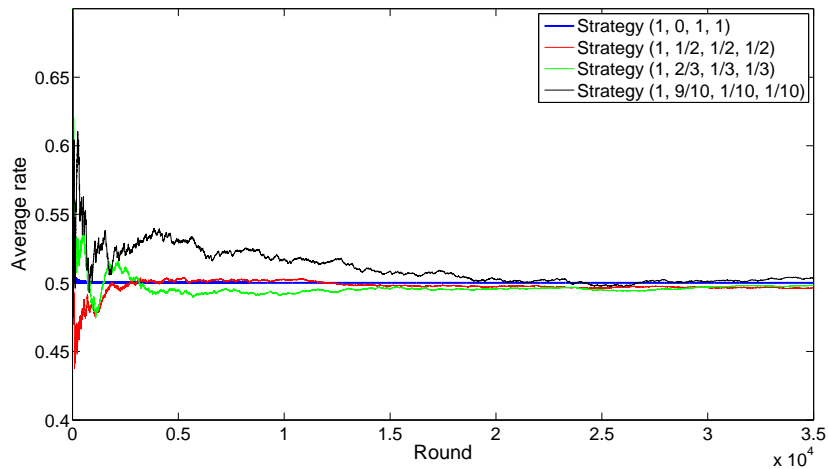}
\vspace*{-.25cm} }}\caption{Convergence paths of different zero-determinant strategies that can be applied by service provider 1 (row player) in the power control game. Strategies with higher $p^{1,2}$ are slower to converge, since as $p^{1,2}$ increases, it becomes more likely for the service provider to access the channel in the next round if already accessed the channel in the previous round. This leads to longer periods of deviation from $\bar{R}_i$, and thus, longer conversion times. } \label{fig:convergence}
\end{figure}

A less obvious conclusion can be carried from games of more than two service providers. For example, consider the game with three service providers $i,j,k$. Assume that in any round of the game, service provider~$i$ achieves one of the following rates:
\[
  \left\{\begin{array}{ll}
 R_i, &\mbox{if accesses the channel alone,} \\
 \alpha_1R_i, &\mbox{if accesses the channel with another provider,}\\
  \alpha_2R_i, &\mbox{if all the service providers access the channel,} \\
   0, &\mbox{otherwise,}
\end{array}\right.
\]
where $0 < \alpha_1, \alpha_2 < 1$. Let $x_i,x_j,x_k$ denote, respectively, the actions of player $i,j,k$ in any round, where  $x_i,x_j,x_k \in \{1,2\}$ and such that $1$ implies access and $2$ implies no access. Let $p^{x_i,x_j,x_k}_i$ denote the probability that service provider~$i$ accesses the channel if the state of the game was $(x_i,x_j,x_k)$ in the previous round.

Following Proposition~\ref{prop:multi}, a zero-determinant strategy allows service provider~$i$ to fix $\bar{R}_i$ at any value in the interval $(0, \alpha_2R_i]$. The structure of the policy is given by:
\begin{eqnarray}
p^{1,1,1}_i &=& 1 + (1 -\frac{\alpha_2R_i}{\bar{R}_i})b_i, \nonumber \\
p^{1,1,2}_i &=& p^{1,2,1}_i = 1 + (1 -\frac{\alpha_1R_i}{\bar{R}_i})b_i, \nonumber \\
p^{1,2,2}_i &=& 1 + (1 - \frac{R_i}{\bar{R}_i})b_i, \nonumber \\
p^{2,1,1}_i &=& p^{2,1,2}_i = p^{2,2,1}_i = p^{2,2,2}_i = b_i, \nonumber
\end{eqnarray}
where
\[
0  < b_i \leq \frac{1}{|1-R_i/\bar{R}_i|}.
\]

Figure~\ref{fig.comparison_3players} shows convergence paths of different strategies when $R_i=1.0, \alpha_1=1/2$, and $\alpha_2=1/3$. All the strategies aim to fix $\bar{R}_i$ at the maximum possible value, $1/3$, where service providers~$j$ and~$k$ access the channel at each round with probability~$1/2$ and $3/4$, respectively. A strategy is displayed in the figure by an $8$-element tuple where the first $4$ elements correspond to $p^{1,1,1}_i, p^{1,1,2}_i, p^{1,2,1}_i$, and $p^{1,2,2}_i$, respectively. Note that, since $\bar{R}_i$ is fixed at the maximum value, then $p^{1,1,1}_1 = 1$ for all the strategies. The pattern observed in Figure~\ref{fig:convergence} applies to Figure~\ref{fig.comparison_3players} where strategies that converge quickly are the strategies that have lower $p^{1,1,2}_i$, $p^{1,2,1}_i$, and  $p^{1,2,2}_i$, i.e., these are the strategies that are less likely to access the channel if they achieved more than the targeted rate, $1/3$, in the previous round.


\begin{figure}
\centerline{{\includegraphics[width=5in]{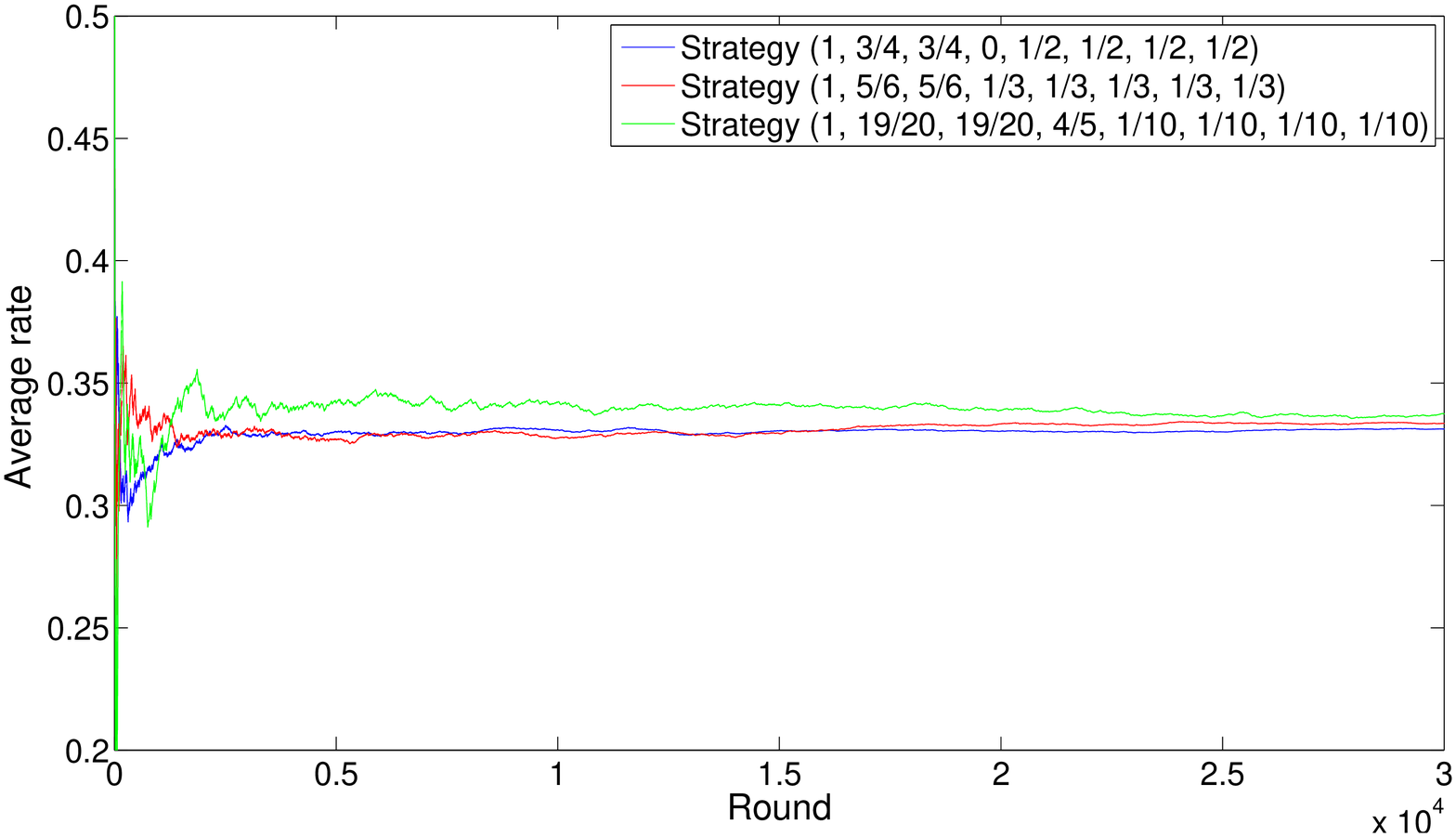}
\vspace*{-.25cm} }}\caption{Convergence paths for multiple zero-determinant strategies in the power control game with three service providers. Strategies that are more likely to rectify, if exceeding the targeted rate, are the strategies that converge relatively quicker. } \label{fig.comparison_3players}
\end{figure}

\subsection{Zero-Determinant Strategies and Power Consumption}
Next, we investigate the impact of the zero-determinant strategies on the average power consumption of the service providers. In the considered power control game, the service providers take binary decisions in each round whether or not to access the channel. If the channel is to be accessed, service provider~$i$ transmits at the maximum allowed power level $\Lambda_{i,\max}$. Therefore, power consumption over the course of the game has a probability distribution derived from the stationary distribution of the state of the game,  $\mbox{\boldmath${\pi}$}$. In particular, consider a $2\times 2$ power control game and consider service provider~$1$, i.e., the row player. The average consumed power is given by
\[
\Lambda_{1,avg}= \Lambda_{1,\max}(\pi_{1,1} +\pi_{1,2}).
\]
Here, $\pi_{1,1}$  is the proportion of rounds where both service providers transmit and $\pi_{1,2}$ is the proportion where only service provider~$1$ transmits.


Consider the game in Figure~\ref{fig:game} and assume that $R_1 = R_2 = 1.0$ and $\theta_1 = \theta_2 = 0.5$. Assume that both service providers use zero-determinant strategies to achieve $\bar{R}_1=0.5$ and $\bar{R}_2= 0.25$. The impact of the different strategies on power savings is shown in Figure~\ref{fig:power}. The horizontal axis displays possible strategies of service provider~$1$ with each strategy denoted by a different value of the variable $b_1$ defined in~(\ref{strategy}). All the values of $b_1$ are taken from the feasible range $[0.1, 1]$, and a common factor of all these strategies is that $p^{1,1} = 1$. We show the proportion of rounds in which service provider~$1$ accesses the channel,  ($\pi_{1,1} + \pi_{1,2}$), where each curve corresponds to a different strategy of service provider~$2$. Here, the strategies of service provider~$2$ are denoted by the vector $\mathbf{q} = (q^{1,1}, q^{2,1}, q^{1,2}, q^{2,2})$, where, following the convention in Section~\ref{sec:theory}, $q^{x,y}$ is the probability that service provider $2$ will access the channel if service provider~$1$ played action~$x$ and service provider~$2$ played action~$y$ in the previous round.

The figure shows that power consumption of service provider~$1$ is unimodal in the value of $b_1$, but can be increasing or decreasing according to the strategy of the opponent. The figure also shows that there exists a trend in power savings that the service provider can achieve from playing against different opponent strategies. Namely, playing against strategies that have relatively low $q^{1,1}$  leads to more power savings. The intuition behind this observation is that when $q^{1,1}$ is low, service provider~$2$ is more likely to skip the channel in the next round if both service providers accessed the channel in the current round. Now since $p^{1,1} = 1$, service provider~$1$ will have the whole channel with probability~$1$ in the next round, i.e., transmitting with no interference and thus achieving high rate.

This argument can be solidified by looking at $p^{1,2}$ vs $q^{1,2}$ and $p^{2,1}$ vs $q^{2,1}$. Notice that by increasing $b_1$, $p^{1,2}$ decreases and $p^{2,1}$ increases, and thus, if playing against a strategy with relatively high  $q^{1,2}$ (meaning  low  $q^{2,1}$) such as $\mathbf{q} = (2/3,0,1/3,1/3)$, the gap between the previous values is going to increase. This means that, if in any round only one service providers accessed the channel, it is more likely for the other service provider to access the channel in the next round, and visa versa, leading to more power savings.  On the other hand, the gap decreases if compared to a strategy with relatively low  $q^{1,2}$ and high   $q^{2,1}$ such as $\mathbf{q} = (9/10,7/10,1/10,1/10)$. In such a case, power consumption is going to increase.



\begin{figure}
\centerline{{\includegraphics[width=5in]{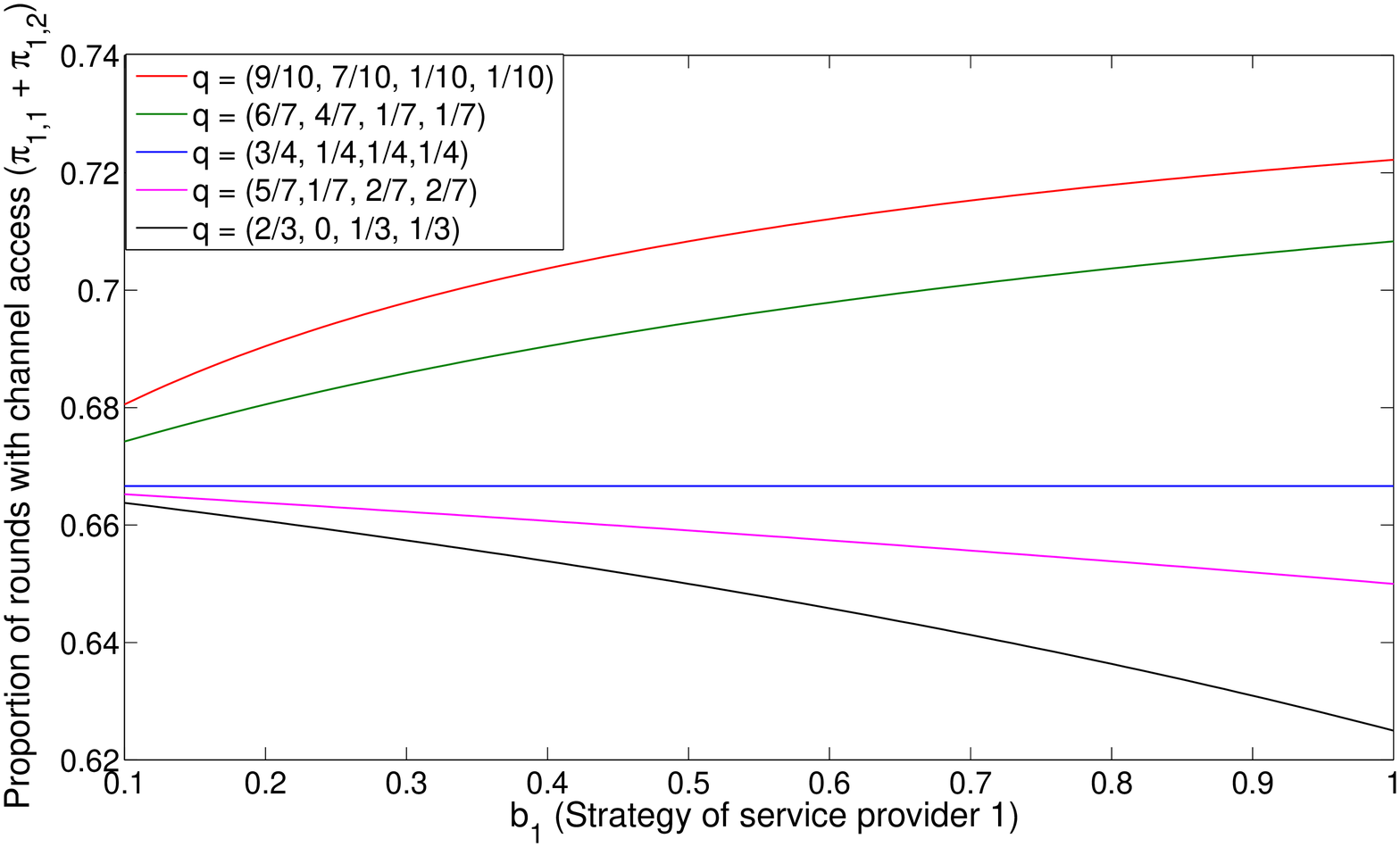}
\vspace*{-.25cm} }}\caption{Proportion of rounds in which service provider~$1$ (row player) accesses the channel displayed for different strategies of service provider~$2$. Playing against strategies that are more likely to skip the channel when the other service provider accesses the channel leads to power savings.} \label{fig:power}
\end{figure}

\section{Conclusion} \label{sec:concl}
In this paper, we considered private commons as a model for secondary sharing of licensed spectrum bands. The system involves multiple wireless service providers sharing an interference channel in uncoordinated fashion and servicing their own populations of co-located end-users. The problem of aggregate downlink power control is formulated as a non-cooperative iterated game. In this regard, we considered a set of Markovian strategies known as ``zero-determinant" strategies that are primarily developed for the iterated Prisoners' Dilemma game and which are shown to allow players to exert control on each other's score. We extended these strategies to be considered for any $2\times 2$ game and in a way based on the structure of the game. We showed that the spectrum sharing game admits an appealing structure that allows service providers to employ power control strategies to set their own aggregate rates regardless of the strategies of other service providers. We provided numerical experiments to study the convergence behavior of these strategies and their impact on power consumption.

\section*{Acknowledgment}
This work was funded by NSERC Strategic Project grant and NSF CNS grant 1116626.



%

\end{document}